\newcommand{\define}[1]{{\bf \boldmath{#1}}}
\theoremstyle{definition}
\newtheorem{thm}{Theorem}
\newtheorem{definition}[thm]{Definition}
\newtheorem{lemma}[thm]{Lemma}
\newtheorem{proposition}[thm]{Proposition}
\newtheorem{remark}[thm]{Remark}
\newtheorem*{terminology}{A remark about terminology}
\newcommand{\relGraph}{\mathrm{relGraph}}
\newcommand{\krelGraph}{ \prescript{}{k}{\relGraph}}
\newcommand{\semigroup}{\mathrm{Semigroup}}
\newcommand{\ob}{\mathrm{ob}}
\newcommand{\SG}{\mathrm{SG}}
\newcommand{\RR}{\mathrm{R}}
\newcommand{\Adv}{\mathrm{Adv}}
\newcommand{\Fr}{\mathrm{Fr}}
\newcommand{\Work}{\mathrm{Work}}
\title[A unified framework for equivalences in social networks (Working paper)]{A unified framework for equivalences in social networks}
\author{Nina Otter and Mason A. Porter} 
\address{Department of Mathematics, University of California Los Angeles \\ 
Los Angeles, CA 90095, USA}
\date{}							
\begin{document}
\maketitle
\thispagestyle{empty}

\begin{abstract}

 A key concern in network analysis is the study of social positions and roles of actors in a network. The notion of ``position'' refers to an equivalence class of nodes that have similar ties to other nodes, whereas a ``role'' is an equivalence class of compound relations that connect the same pairs of nodes. An open question in network science is whether it is possible to simultaneously perform role and positional analysis. Motivated by the principle of functoriality in category theory we propose a new method that allows to tie role and positional analysis together. We illustrate our method on two well-studied data sets in network science.

\end{abstract}


\section{Introduction}\label{S:intro}

Networks have been used to model social systems since the 1930s; in these models nodes
correspond to social actors, and ties to  social relations between
them\cite{freeman}. Such networks are usually called ``social networks'', and many of the techniques developed to study social networks  now constitute
some of the main methods used in network analysis at large. 

One of the major aims of social network scientists is the study of social positions and roles. 
  The notions of ``position'' and ``role'' are sometimes  used interchangeably in the literature, however here we use the  following convention, which is also adopted in, e.g., \cite{WF94}:
roughly, the notion of ``position'' refers to a collection of actors, represented by nodes,
who have similar ties to other actors, whereas a
``role'' is a collection of compound  ties that connect the same pair of actors or positions.  

In positional analysis, one  seeks to  partition the set of nodes of a social network according to a chosen measure of similarity. In practice, nodes are rarely perfectly similar, thus one needs to choose a quantitative measure  of ``approximate'' similarity. Given a partition of the nodes  in blocks of nodes that are approximately similar, there are several different methods that network scientists use to decide how to assign ties between the blocks of the partition. Using one of these methods, one obtains a network in which nodes correspond to the blocks in the partition, and the ties  capture some of the information given by the ties in the original network. The network obtained from  a partition with $n$ blocks is usually called an {\bf $n$-blockmodel} of a network.

In role analysis, given a network $G$ one studies the semigroup $\SG(G)$ generated by the relations and compound relations between the nodes of $G$, called the {\bf semigroup of relations} of $G$. The semigroup of relations typically has a large number of elements, and one is thus interested in finding  semigroups $S$ with fewer elements that are homomorphic images of $\SG(G)$, that is, such  that there is a surjective semigroup homomorphism $\SG(G)\twoheadrightarrow S$. This amounts to creating a partition on the set of relations; the resulting semigroup $S$ is called a {\bf homomorphic reduction} of $\SG(G)$.

Ideally, one would want to find a way to combine these two types of analysis. This would require being able to partition the nodes into blocks of approximately similar nodes and to partition the relations at the same time. There has been a lot of effort in this direction, see, e.g., \cite{BP86}, but this remains an open question in the field.

In our work we first consider  the ideal case of  blockmodels in which nodes within a block are all perfectly equivalent. 
In positional analysis it is often the case that one studies different types of blockmodels associated to a multirelational graph. One can think of different  blockmodels as providing an analysis of the structure of a network at different resolution levels. As was observed by Peixoto \cite{Peixoto14}, the methods that are widely used to obtain such blockmodels might not be able to distinguish between significant features in the structure of the network and noise. Thus, ideally one would want to study a nested sequence of blockmodels that captures information of the network at different resolution levels. Peixoto introduced a stochastic model to perform such an analysis \cite{Peixoto14} (see also Section \ref{S:SBM}).

 In the ideal case of perfect equivalences, given a network $G$ and a nested sequence of such blockmodels, one would obtain a sequence $G_1,\dots , G_p$ of networks and surjective network homomorphisms $G\twoheadrightarrow G_i$ for each $i$. 
Because the different partitions are nested within each other, there are also surjective network homomorphisms between the different blockmodels, and (up to a relabeling of the indices of the blockmodels) one gets a sequence $G=G_0,G_1,\dots , G_p$ of networks and network homomorphisms $\pi_{i,j}\colon G_i\twoheadrightarrow G_j$ for $i\leq j$ such that $\pi_{k,j}\circ \pi_{i,k}=\pi_{i,j}$ whenever $i\leq k \leq j$. If we now associate each network $G_i$ with its semigroup of relations $\SG(G_i)$, we have 
that each network homomorphism $\pi_{i,j}$ also induces a homomorphism $\SG(\pi_{i,j})\colon \SG(G_i)\to \SG(G_j)$ of semigroups in a way that preserves composition of homomorphisms. That is, $\SG(\pi_{i,j})=\SG(\pi_{k,j})\circ \SG(\pi_{i,k})$ whenever $i\leq k \leq j$. In the language of category theory, we have that  the assignment of the semigroups of relations to a network is a functor. This is important, because functoriality allows to investigate questions related to robustness, see, e.g., \cite{BS14}, where functoriality is the key ingredient in proving stability results, or \cite{C09} for a more general explanation of why functoriality is important in data analysis.

In the case of approximate equivalences the situation is more complex, as there is in general neither a network homomorphism from a network to any of its blockmodels, nor is there a sensible way to relate different blockmodels of the same network, or different reduction semigroups, nor does the assignment of a blockmodel to a graph induce a homomorphism on the corresponding semigroups of relations. Here we propose a 
framework that addresses all these points: we consider weighted social networks, with weights on ties capturing information about the degree of ``approximate'' similarity, and introduce a new algebraic structure, the ``truncated semigroup of relations'', which allows to perform role analysis for weighted social relations. We illustrate the methods that we introduce with two well-known data sets in social network analysis. 
 With this new framework we thus open the avenue 
to investigating robustness problems, and we also give a way of tying role and positional analysis together.

\begin{terminology}
Many of the notions that we  discuss in this manuscript are known under several different names in the social network analysis or network analysis literature. To make our paper as accessible as possible, whenever possible we mention alternative names and give appropriate references. 
\end{terminology}

%
%

\section{Multirelational graphs}

A multirelational graph consists of a set of nodes, together with a set of ties which can either be directed or undirected, and are labeled by relations. We give the definition of multirelational graph using adjacency matrices.

\begin{definition}
A \define{multirelational graph} $G$ consists of a pair $\left(V,\{A_i\}_{i=1}^r \right)$, where $V$ is a finite set of \define{vertices} of $G$
and the adjacency matrices $A_1,\dots , A_r$ are 
$n\times n$ matrices, with Boolean entries (i.e., $0$ and $1$), that we call \define{relation matrices} of $G$.
\end{definition}

\begin{remark}
Relation matrices are often called ``sociomatrices'' in the social network analysis literature, for instance in \cite{WF94}, while multirelational graphs are often also called ``multiplex networks'' in network science.
\end{remark}




In Figure \ref{F:running ex}, we give an example (which we take from \cite{WF94}) of a fictitious multirelational graph with two relation matrices, $A_H$ and $A_L$. The nodes of such a graph may represent members of a family, employees in a firm, or something else. The relation matrix $A_H$ could encode the relation ``is the parent of'' or ``oversees the work of'', and the relation matrix $A_L$ could encode the relation ``is in the same generation as'' or ``trades jobs with''.

\begin{figure}[h!]
\begin{alignat}{2}
\notag & i)\qquad\qquad&&\notag
A_H=\begin{pmatrix}
0 & 1 & 1 & 0 & 0& 0\\
0 & 0 & 0 & 1 & 0 & 0\\
0 & 0 & 0 & 0 & 1 & 1\\
0 & 0 & 0 & 0 & 0 & 0 \\
0 & 0 & 0 & 0 & 0 & 0 \\
0 & 0 & 0 & 0 & 0 & 0 
\end{pmatrix} \,,
\qquad
A_L=\begin{pmatrix}
1 & 0 & 0 & 0 & 0& 0\\
0 & 1 & 1& 0 & 0 & 0\\
0 & 1 & 1& 0 & 0 & 0\\
0 & 0 & 0 & 1 & 1 & 1 \\
0 & 0 & 0 & 1 & 1 & 1 \\
0 & 0 & 0 & 1 & 1 & 1
\end{pmatrix}\\
\notag &\\
\notag &\\
\notag &
ii)  \qquad\qquad
\notag &&H\qquad
\begin{tikzpicture}[scale=0.6,every node/.style={scale=.68}]
 \node[shape=circle,fill=black] (1) at (0,0) {};
  \node[shape=circle,fill=black] (2) at (-2,-2) {};
 \node[shape=circle,fill=black] (3) at (-2,-4)  {};
 \node[shape=circle,fill=black] (4) at (2,-2)  {};
 \node[shape=circle,fill=black]  (5) at (1,-4)  {};
 \node[shape=circle,fill=black]  (6) at (3,-4)  {};
\draw[->]
(1) edge (2)
(2) edge (3)
(1) edge (4)
(4) edge (5)
(4) edge (6);
\draw[->,color=white]
(3) edge [bend right=90] (6);
\end{tikzpicture}
\qquad\qquad
L \qquad\begin{tikzpicture}[scale=0.6,every node/.style={scale=.68}]
 \node[shape=circle,fill=black] (1) at (0,0) {};
  \node[shape=circle,fill=black] (2) at (-2,-2) {};
 \node[shape=circle,fill=black] (3) at (-2,-4)  {};
 \node[shape=circle,fill=black] (4) at (2,-2)  {};
 \node[shape=circle,fill=black]  (5) at (1,-4)  {};
 \node[shape=circle,fill=black]  (6) at (3,-4)  {};
\path[->]
(1) edge [out=225, in=-45, looseness=4,loop] (1)
(4) edge [out=45, in=315, looseness=2,loop] (4)
(5) edge [out=120, in=65, looseness=3,loop] (5)
(2) edge [out=225, in=135, looseness=4,loop] (2)
(3) edge [out=225, in=135, looseness=4,loop] (3)
(6) edge [out=45, in=315, looseness=2,loop] (6)
(2) edge [bend left] (4)
(4) edge [bend left] (2)
(3) edge [bend left] (5)
(5) edge [bend left] (3)
(6) edge [bend left] (5)
(5) edge [bend left] (6)
(6) edge [bend left=60] (3)
(3) edge [bend right=90] (6);
\end{tikzpicture}
\end{alignat}
\caption{(i) Two relation matrices, $A_H$ and $A_L$, of a multirelational graph with six vertices. (ii) Associated graphs that illustrate these relations. (We draw this example from \cite{WF94}.)}\label{F:running ex}
\end{figure}
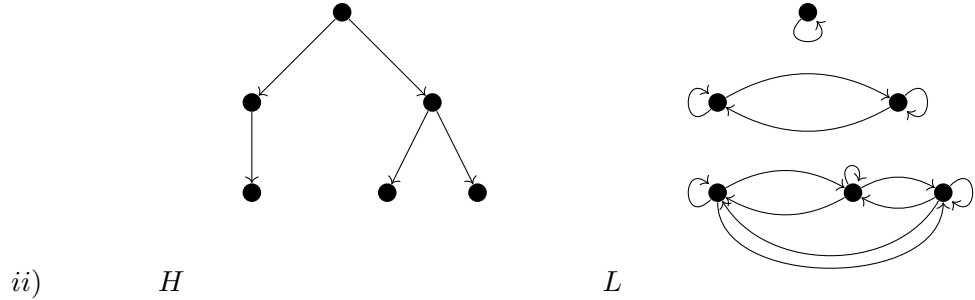


\section{Positional analysis}

When conducting positional analysis, social scientists seek to associate
a multirelational graph $G$ with a multirelational graph $G'$ with fewer nodes, such that the nodes of $G'$ represent equivalence classes of nodes of $G$ that are ``similar'' in an appropriate sense. 
Positional analysis typically includes the following steps:
\begin{enumerate}[label=(\roman*{})]
\item choosing a notion of similarity;
\item defining a quantitative measure of this similarity, to account for  ``approximate'' similarities;
\item partitioning the set of nodes of $G$ according to this quantitative measure of similarity; 
\item associating to each relation matrix $A$ a ``density matrix'' $\widetilde{A}$ that is smaller in size and in which rows and columns correspond to the blocks in the partition;
\item transforming each density matrix into a Boolean matrix called ``image matrix'' to obtain a multirelational graph $\widetilde{G}$ whose nodes correspond to blocks of the partition.
\end{enumerate}


In the following subsections,
we give a brief overview of these steps.


\subsection{Equivalences}

There are many notions of similarity that have been studied in network analysis. The most well-studied one is ``structural'' equivalence. 

\begin{definition}
Given a multirelational graph $(V,\{A_i\}_{i=1}^r)$, we say that two nodes $i$ and $j$ are \define{structurally equivalent} if 
\begin{align*}
	A_s(i,k) \ne 0 &\Longleftrightarrow \, A_s(j,k)\ne 0  \,,  \notag \\
	A_s(k,i) \ne 0 &\Longleftrightarrow \, A_s(k,j)\ne 0 
\end{align*}
for all relations $s \in \{1,\dots, r\}$

\end{definition}

Therefore, two nodes are structurally equivalent exactly when they have the same ties to all other nodes for all relations. Other types of well-studied equivalence relations include regular equivalence and isomorphic equivalence. See \cite{WF94} for an overview.


In practice, nodes are rarely perfectly structurally equivalent. For instance, in the multirelational graph in Figure \ref{F:running ex}, if there were only one relation to consider at a time, 
we would obtain the partition 
\[
	\{1\} \, , \{2\} , \{3\} , \{4\} , \{5,6\}
\]
for the relation matrix $A_H$ and the partition
\[
	\{1\}, \{2, 3\}, \{4, 5, 6\}
\]
for the relation matrix $A_L$. However, no two pairs of distinct nodes of the graph have exactly the same ties under {\em both} relations. Therefore, the partition of the graph that is given by structural equivalence is the partition in which every block is a singleton:
\[
	\{1\} \, , \{2\} , \{3\} , \{4\} , \{5\}, \{6\} \, .
\]

Consequently, in practice, researchers are interested in quantitative measures that capture when two nodes are approximately equivalent. 

%
%
%



\subsection{Measures of approximate equivalence}\label{SS:approx eq}
Several measures to quantify structural equivalence have been proposed in the literature, such as Euclidean distance, or cosine similarity. Quantitatives measures for other types of equivalences are less well studied in the literature. We refer the reader to the overviews in \cite[9.4]{WF94} and \cite[7.6]{N18}.


\subsection{Density matrices}\label{SS:density matrices}

After partitioning the set of nodes using a measure for approximate equivalence, we
replace the relation matrices with matrices of smaller size that store information about the fraction of ties 
between the blocks of the partition.

\begin{definition}\label{D:density matrix}
Given a multirelational graph $(V,\{A_i\}_{i=1}^r)$ and a partition $B_1,\dots B_m$ of $V$, the \define{permuted matrix} of the relation matrix $A_s$ is the matrix that we obtain from $A_s$ by permuting columns and rows such that rows and columns are adjacent exactly when they correspond to nodes in the same block of the partition.
 The \define{density matrix} of the relation matrix $A_s$ is the $m\times m $ matrix $\widetilde{A_s}$ with entries
\[
	\widetilde{A_s}(i,j)=\begin{frac}{ \text{number of $1$ entries in the $(i,j)$th block of the permuted matrix}}{\text{number of entries in the $(i,j)$th block}}\end{frac} \, .
\]

\end{definition}

\medskip
\noindent
We give an example of density matrices in Figure \ref{F:density matrices}.

\begin{figure}[h!]

$\widetilde{A_H}=\begin{pmatrix}
0 & 1 & 0 \\
0 & 0 & 0.5\\
0 & 0 & 0
\end{pmatrix}\,,
$\qquad\qquad
$\widetilde{A_L}=\begin{pmatrix}
1 & 0 & 0 \\
0 & 1 & 0\\
0 & 0 & 1
\end{pmatrix}
$
\caption{Using  a method for approximate equivalence  yields
the partition $B_1=\{1\}, B_2=\{2, 3\}, B_3=\{4, 5, 6\}$ of the multirelational graph from Figure \ref{F:running ex}. In this figure, we give the corresponding density matrices.}\label{F:density matrices}
\end{figure}

If the nodes in the blocks of a partition are perfectly equivalent, the entries in the density matrix are either $1$ or $0$. As we have already noted, this is rare in practice, so entries in a density matrix can be any real number in the interval $[0,1]$. We can interpret the entries in the density matrix as weights that are associated with the ties between the positions: values that are closer to $0$ indicate weaker ties between blocks, and values that are closer to $1$ indicate stronger ties between blocks.


\subsection{Image matrices}

Finally, given a density matrix, we seek to associate a Boolean matrix with it to obtain a multirelational graph $G'$ whose nodes correspond to the blocks in the partition. There are several methods to do this; see the overview in \cite{WF94}. 
One of the methods discussed in \cite{WF94}  is the $\alpha$-density criterion: given a number $\Delta\in (0,1)$, we substitute an entry $c$ of the density matrix with a $1$ if $c\geq \Delta$ and we otherwise place a $0$. The resulting matrix is called an \define{image matrix}. A common value of $\Delta$ is the fraction of $1$ entries in the relation matrix. We illustrate this choice with an example in Figure \ref{F:image matrices}.

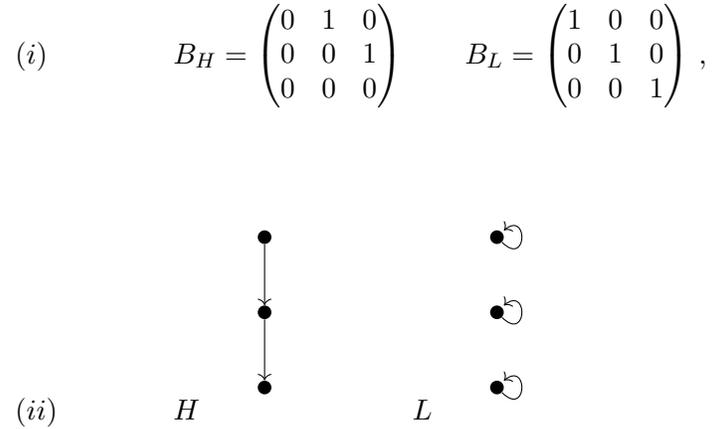
\begin{figure}[h!]
\begin{alignat}{2}
\notag & (i)\qquad\qquad&&\notag
B_H=\begin{pmatrix}
0 & 1 & 0 \\
0 & 0 & 1\\
0 & 0 & 0
\end{pmatrix}
\qquad
B_L=\begin{pmatrix}
1 & 0 & 0 \\
0 & 1 & 0\\
0 & 0 & 1
\end{pmatrix} \,, \\
\notag &\\
\notag &\\
\notag &
(ii)  \qquad\qquad
\notag &&H\qquad
\begin{tikzpicture}[scale=1,every node/.style={scale=.5}]
 \node[shape=circle,fill=black] (1) at (0,0) {};
 \node[shape=circle,fill=black] (2) at (0,-1) {}; 
 \node[shape=circle,fill=black] (3) at (0,-2) {}; 
 \path[->]
 (1) edge (2)
 (2) edge (3);
  \path[->,color=white]
 (3) edge [out=-45, in=45, looseness=4,loop] (3);
 \end{tikzpicture}
\qquad\qquad
L \qquad 
\begin{tikzpicture}[scale=1,every node/.style={scale=.5}]
 \node[shape=circle,fill=black] (1) at (0,-0.5) {};
 \node[shape=circle,fill=black] (2) at (0,-1.5) {}; 
 \node[shape=circle,fill=black] (3) at (0,-2.5) {}; 
 \path[->]
(1) edge [out=-45, in=45, looseness=4,loop] (1)
(2) edge [out=-45, in=45, looseness=4,loop] (2)
(3) edge [out=-45, in=45, looseness=4,loop] (3);
 \end{tikzpicture}
\end{alignat}

\caption{(i) The image matrices that are associated with the density matrices from Figure \ref{F:density matrices} and (ii) their corresponding graph representations. }\label{F:image matrices}
\end{figure}

\section{Role analysis}\label{S:role analysis}

In role analysis, researchers
are interested in studying all possible composite relations that arise between social actors.
One starts by defining what it means to compose two relations.

\begin{definition}
Given a multirelational graph $(V,\{A_i\}_{i=1}^r)$, we say that there exists a \define{compound relation} from node $i$ to node $j$ if there are nodes $k_1,\dots , k_l$ such that the product 
\[
	A_{s_l}(k_l,j)A_{s_{l-1}}(k_{l-1},k_l)\dots A_{s_2}(k_1,k_2)A_{s_1}(i,k_1)
\]
of the matrix entries is nonzero, where the indices $s_1,\dots s_l$ are in $
\{1,\dots ,r\}$.
\end{definition}

For instance, in the multirelational graph in Figure \ref{F:running ex}, there is a compound relation from $1$ to $3$ given by $A_{L}(2,3)A_H(1,2)$. If we interpret the nodes as representing the employees of a firm, this compound relation may express something like ``employee 1 oversees an employee who trades jobs with employee 3''.

To study all possible compounds relations, one computes the semigroup that is generated by the relation matrices of a multirelational graph.

\begin{definition}
Given a multirelational graph $(V, \{A_i\}_{i=1}^r)$, its \define{semigroup of relations} is the semigroup that is generated by the matrices $A_1,\dots , A_r$, with a binary operation given by Boolean matrix multiplication. We denote such a semigroup by $\SG(A_1,\dots , A_r)$.
\end{definition}

In Table \ref{sg table}, we give the multiplication table for the semigroup that is associated with the multirelational graph from Figure \ref{F:running ex}.

\begin{table}
\begin{tabular}{ c || c | c | c | c |}
$\circ$         & H                     & L                     & HL                    & HH                    \\ \hline\hline
 H 	&   HH   &  HL & HH &    0 \\ \hline
 L & HL & L & HL & HH  \\ \hline
 HL & HH & HL & HH & 0 \\\hline
 HH & 0 & HH & 0 & 0\\
\end{tabular}
\caption{Multiplication table for the semigroup that is associated with the  multirelation graph from Figure \ref{F:running ex}. For simplicity, we indicate the matrix $A_H$ by $H$ and the matrix $A_L$ by $L$.}\label{sg table}
\end{table}

In applications, the semigroup of relations often contains a number of elements that is too large for researchers to be able to  easily identify  equalities (or ``approximate'' equalities) between compound relations from the associated multiplication table. Therefore it is desirable to simplify the semigroup of relations by associating to it a semigroup with fewer elements, called its ``homomorphic reduction''.

\begin{definition}
Given a semigroup $S$, a \define{homomorphic reduction} of $S$ is a semigroup $S'$ together with a surjective semigroup homomorphism $S\to S'$.
\end{definition}

A homomorphic reduction of a semigroup $S$ is not unique, and different methods have been suggested for
different ways of choosing one \cite{BM80}.
In our work we introduce a way to tie the two analyses together in such a way  that a choice of blockmodel  
also gives a choice of homomorphic reduction.


\section{Combining role and positional analysis}\label{S:role positional}

There are usually two ways in which researchers have combined role and positional analysis:
\begin{enumerate}[label=(\roman*{})]
\item first partition the set of nodes of a multirelational graph, and then study patterns between ties in the reduced multigraph;
\item first study the patterns of the ties in the (unreduced) multirelational graph, and then partition the set of nodes.
\end{enumerate}

%


We illustrate these two types of approaches with the diagram in Figure \ref{F:pos and role analysis}.
Performing the two typically leads to different results. That is,
the diagram in Figure \ref{F:pos and role analysis} does not commute.
Ideally, it is desirable 
to have a way 
to simultaneously perform positional and role analysis. In such an approach, there is no need to choose a path in the diagram in Figure \ref{F:pos and role analysis}. A key contribution of our paper is to recast this problem under a different angle: instead of seeking to perform the two analyses simultaneously, we focus on how performing one analysis affects the other, and we argue that this question is crucial to performing role and positional analysis for real-world networks, and to investigate stability questions. 

%
%
%
%


\begin{figure}
\begin{tikzpicture}
\node (1) at (0,0) {multirelational graph};
\node (2) at (4,0) {roles};
\node (3) at (0,-3) {positions}; 
\node (4) at (4,-3) {roles and positions};
\draw[->] (1) edge (2)
(2) edge (4)
(1) edge (3)
(3) edge (4);
\end{tikzpicture}
\caption{A schematization of the two types of ways of combining positional and role analysis.}\label{F:pos and role analysis}
\end{figure}
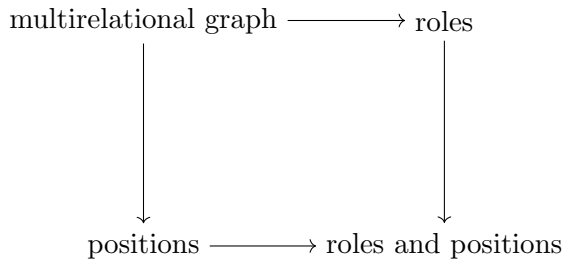

 \section{Perfect equivalences}

To motivate our framework, we first consider the case in which the nodes in a block are perfectly equivalent. 
Given a multirelational graph $G=(V,\{A_i\}_{i=1}^r)$ and a partition $B_1,\dots , B_m$ of its set of nodes from
positional analysis, we assume that all nodes in a block are structurally equivalent. Therefore, as we remarked in Section \ref{SS:density matrices}, we have that the density matrices $\widetilde{A_1},\dots \widetilde{A}_r$ are Boolean matrices, so if we set $\widetilde{V}=\{B_1,\dots , B_m\}$, it follows that $\left(\widetilde{V}, \{\widetilde{A}_i\}_{i=1}^r\right)$ is a multirelational graph. Given a vertex $i\in V$, let $\phi(i)$ denote the block in the partition that includes vertex $i$. The map $i\mapsto \phi(i)$ then induces a  graph homomorphism 
 \[
	 \left(V,\{A_i\}_{i=1}^r\right) \twoheadrightarrow \left(\widetilde{V}, \{\widetilde{A}_i\}_{i=1}^r\right)
 \]
that is surjective both at the level of vertices and edges. If there 
 is a tie from $i$ to $j$ under the relation $s$, then there is a $1$ in the $i$th row and $j$th column of  $A_s$. 
 Because the density matrix $\widetilde{A}_s$ is Boolean, there is also a $1$ in the entry that corresponds to the $\phi(i)$th row and $\phi(j)$th column of $\widetilde{A}_s$. Consequently, there is a tie from $\phi(i)$ to $\phi(j)$ under the relation $s$.  

We employ
the following definition.


\begin{definition} Let  $G=(V,\{A_i\}_{i=1}^r)$ be a multirelational graph, and let $\widetilde{V}=\{B_1,\dots , B_m\}$ be a partition of $V$ such that the density matrices $\widetilde{A}_1,\dots , \widetilde{A}_r$ are Boolean. A \define{reduction} $\RR(G)$ of $G$ is the multirelational graph $\left(V, \{\widetilde{A}_i\}_{i=1}^r\right)$ together with the graph homomorphism $\phi$ that is given by sending every node of $V$ to its block in the partition.
 Abusing notation, we often 
 denote the graph by  $\RR(G)$.
 \end{definition}

If we consider the semigroups of relations associated with a multirelational graph $G$ and with a reduction $\RR(G)$, we have the following lemma.
 
 \begin{lemma}\label{L:reduction homo}
 
 The  homomorphism $\phi\colon G\to \RR(G)$ induces a semigroup homomorphism 
 \[
	\SG(\phi)\colon  \SG(A_1,\dots , A_r)\to \SG(\widetilde{A}_1,\dots , \widetilde{A}_r) \,,
 \]
which is defined by sending each matrix $A_i$ to its density matrix $\widetilde{A}_i$. 
\end{lemma}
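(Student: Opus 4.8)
The plan is to produce the homomorphism not word-by-word on the generators but as an intrinsic ``reduction'' operation on matrices, and then to verify that this operation is multiplicative; well-definedness will then come for free. The starting observation is that, because the partition $\widetilde{V}$ arises from structural equivalence, each generator $A_s$ is constant on blocks: I would first show that for all $i,j$ one has $A_s(i,j)=\widetilde{A}_s(\phi(i),\phi(j))$. This follows directly from the definition of structural equivalence — if $i,i'$ lie in the same block they have identical rows, and if $j,j'$ lie in the same block the structural-equivalence condition on in-ties forces $A_s(i,j)=A_s(i,j')$ — so the $(\phi(i),\phi(j))$ block of the permuted matrix is constant, and its value is by definition the density-matrix entry $\widetilde{A}_s(\phi(i),\phi(j))$.

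Call a Boolean matrix $M$ on $V$ \emph{block-constant} if $M(i,j)$ depends only on the pair $(\phi(i),\phi(j))$, and write $\widetilde{M}$ for the $m\times m$ Boolean matrix recording these values. The generators are block-constant by the previous step. The crux is the following multiplicativity claim, which I would establish next: if $M,N$ are block-constant, then so is their Boolean product $MN$, and $\widetilde{MN}=\widetilde{M}\,\widetilde{N}$. The computation is short. For Boolean multiplication, $(MN)(i,j)=1$ exactly when there is some $k$ with $M(i,k)=N(k,j)=1$; using block-constancy, this condition depends on $k$ only through $\phi(k)$, so it is equivalent to the existence of a block $\gamma$ with $\widetilde{M}(\phi(i),\gamma)=\widetilde{N}(\gamma,\phi(j))=1$. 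Here I would emphasize the one genuinely load-bearing point: every block of a partition is nonempty, so quantifying over nodes $k$ and quantifying over blocks $\gamma$ yield the same truth value. That existence statement is precisely $(\widetilde{M}\,\widetilde{N})(\phi(i),\phi(j))=1$, which proves the claim.

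With these two facts in hand the lemma assembles quickly. Every element of $\SG(A_1,\dots,A_r)$ is a Boolean product of generators, hence — by induction using the multiplicativity claim — block-constant, so $M\mapsto\widetilde{M}$ is a well-defined function on the underlying set of $\SG(A_1,\dots,A_r)$, and the reduction of $A_{s_1}\cdots A_{s_l}$ is $\widetilde{A}_{s_1}\cdots\widetilde{A}_{s_l}\in\SG(\widetilde{A}_1,\dots,\widetilde{A}_r)$. Setting $\SG(\phi)(M):=\widetilde{M}$ therefore lands in the target semigroup, sends each $A_i$ to $\widetilde{A}_i$, and satisfies $\SG(\phi)(MN)=\widetilde{MN}=\widetilde{M}\,\widetilde{N}=\SG(\phi)(M)\,\SG(\phi)(N)$, so it is a semigroup homomorphism. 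The subtlety to flag — the main obstacle — is well-definedness: a single matrix in $\SG(A_1,\dots,A_r)$ may be represented by many different words in the generators, so one cannot simply decree $A_{s_1}\cdots A_{s_l}\mapsto\widetilde{A}_{s_1}\cdots\widetilde{A}_{s_l}$ without checking independence of the chosen representative. Routing the definition through the intrinsic operation $M\mapsto\widetilde{M}$ dissolves this issue, since $\widetilde{M}$ is manifestly a function of $M$ alone, and the multiplicativity claim simultaneously guarantees that equal products of generators reduce to equal products of the reduced generators.
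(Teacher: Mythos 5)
Your proof is correct and follows essentially the same route as the paper's: both arguments come down to showing that the density-matrix reduction $M\mapsto\widetilde{M}$ commutes with Boolean matrix multiplication on matrices whose blocks are constant, and your entrywise quantifier exchange over (nonempty) blocks is the same computation the paper performs in block-matrix language. If anything, you are more explicit than the paper, whose proof only verifies multiplicativity on two-fold products of generators and leaves implicit the well-definedness and induction step that you flag as the main subtlety.
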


\begin{proof}\footnote{We note that we could have omitted this proof, as this lemma is a trivial example of what is discussed in  \cite{KR84}. However, we have decided to include a proof, as we think that it can help in illustrating the methods that we discuss.}
To see that this is indeed a homomorphism of semigroups, we need to show that it sends a product $A_iA_j$ to the product of the corresponding density matrices $\widetilde{A}_i \widetilde{A}_j$.  
A partition of the nodes into $m$ blocks induces block decompositions on the matrices $A_1,\dots , A_r$ that are compatible with each other, and therefore we can perform the products $A_iA_j$ as products of block matrices \cite[Theorem 1.9.6]{Ev80}. 
We denote the $(k,l)$th block of a matrix $A_i$ by $A_i^{kl}$. Then the $(k,l)$th block in the matrix $A_iA_j$ is given by the Boolean sum $A_i^{k1}A_j^{1l}+A_i^{k2}A_j^{2l}+\dots +A_i^{km}A_j^{ml}$. Further, we have that a product of blocks $A_i^{kh}A_j^{hl}$ is a block with all entries $1$ exactly when both blocks $A_i^{kh}$ and $A_j^{hl}$ have all entries $1$, and $0$ otherwise.  And finally the  $(k,l)$th block of  $A_iA_j$ contains all $1$s if at least one of the block  $A_i^{k1}A_j^{1l},A_i^{k2}A_j^{2l}, A_i^{km}A_j^{ml}$ contains all $1$s. The  $(k,l)$th block of  $A_iA_j$ contains all $0$s  otherwise.

Now, by the definition of density matrices, there is a $1$ (respectively, a $0$) in the $(k,h)$th entry of matrix $\tilde{A_i}$ exactly when the  $(k,h)$th block in the  matrix $A_i$ consists of all $1s$ (respectively $0$s). Thus, we see that there is a $1$ (respectively $0$), in the $(k,l)$th entry of the product  $\tilde{A_i}\tilde{A_j}$ exactly when at least one of the products 
$\tilde{A_i}^{k1}\tilde{A_j}^{1l}, \dots , \tilde{A_i}^{km}\tilde{A_j}^{ml}$ is $1$. Thus, we see that there is a $1$ (respectively $0$) in the $(k,l)$th entry of  $\tilde{A_i}\tilde{A_j}$ exactly when the $(k,l)$th block of $A_iA_j$ consists of all $1$s (respectively, $0$), and we obtain our claim, namely that 
$\SG(\phi)(A_i A_j)=\SG(\phi)(A_i)\SG(\phi)(A_j)$.

\end{proof}

Thus, by Lemma \ref{L:reduction homo} once we choose a reduction of $G$, we automatically also have a reduction of the corresponding semigroup of relations. 

As we discussed in Section \ref{S:intro}, ideally in positional analysis one is interested in studying not one blockmodel, but a nested sequence of blockmodels. 
n the ideal (yet unrealistic) case that all of these partitions have Boolean density matrices, one has a sequence $G_1,\dots , G_p$ of multirelational graphs and surjective multirelational graph homomorphisms $G\twoheadrightarrow G_i$ for each $i$. 
Because the different partitions are nested within each other, there are also surjective multirelational graph homomorphisms between the reduction graphs, and (up to a relabeling of the indices of the reduction) one gets a sequence $G=G_0,G_1,\dots , G_p$ of multirelational graphs and multirelational graph homomorphisms $\pi_{i,j}\colon G_i\twoheadrightarrow G_j$ for $i\leq j$ such that $\pi_{k,j}\circ \pi_{i,k}=\pi_{i,j}$ whenever $i\leq k \leq j$. If we now associate each multirelational graph $G_i$ with its semigroup of relations $\SG(G_i)$, we see
that each multirelational graph homomorphism $\pi_{i,j}$ also induces a homomorphism $\SG(\pi_{i,j})\colon \SG(G_i)\to \SG(G_j)$ of semigroups in a way that preserves composition of homomorphisms. That is, $\SG(\pi_{i,j})=\SG(\pi_{k,j})\circ \SG(\pi_{i,k})$ whenever $i\leq k \leq j$.

In the language of category theory, we see role analysis is a functor.  We devote Section \ref{S:cat and fun} and \ref{S:role analysis} to making this statement precise: we first introduce a few necessary
ideas from category theory, we then  prove that the assignment of the semigroup of relations to a multirelational graph, at least under the assumption of perfect equivalences, is a functor, and finally we explain why functoriality is important.

\section{Categories and functors}\label{S:cat and fun}

Category theory is a field that studies general abstract structures in mathematics. It does so in focusing not on the mathematical objects themselves, but in how they are related to each other in a way that preserves some of the structure of the objects.
 Therefore, a category consists of a certain family of objects, a family of morphisms between them that captures some structure-preserving way of relating the objects, and a rule to compose these morphisms. For example, one can have a category with objects that consist of graphs and morphisms that consist of graph homomorphisms, together with the usual composition of graph homomorphisms.

%
%
%
%
Following the same principle, one is interested not only in categories, but also in ways to relate them while preserving some of their structure. Consequently, one introduces the notion of a functor between categories: given two categories, $C$ and $D$, a functor $F\colon C\to D$  is given by a pair of maps --- one on the set of objects and one on the set of morphisms --- such that one preserves the composition of morphisms.\footnote{One can take this even further by considering structure-preserving ways to relate different functors to each other; this leads to the concept of natural transformations, but we will not use these in the present work.}

There are numerous introductory texts with a view towards applications, see, e.g. \cite{FS18,TB18}.


\begin{definition}
A \define{category} $C$ is given by the following data:
\begin{itemize}
\item a set of objects $\ob C$
\item  for every pair of objects, $x$ and $y$, a set of morphisms $C(x,y)$
\item for every triple of objects $x,y,z$ composition maps
\begin{alignat}{2}
	 C(x,y)\times C(y,z) &\notag  \to C(x,z)\\
	  (f \, , \, g) &\notag \mapsto g\circ f 
\end{alignat}

\item for every object $x$
a map that assigns to $x$ its identity morphism
\[
	\{\star\}\to C(x,x)\colon \star \mapsto 1_x \,.
\]
\end{itemize}
This data satisfies axioms that make the 
composition of morphisms  associative and the identity morphism for every object the neutral element for this composition:
 for any quadruple $x,y,z,w$ of objects and any triple $f\in C(x,y)$, $g\in C(y,z)$, $h\in C(z,w)$ of morphisms, we have that $h\circ (g\circ f)=(h\circ g)\circ f$. Furthermore, for any pair of objects, $x$ and $y$, and any morphism $f\in C(x,y)$, we have that $1_y\circ f=f$ and $f\circ 1_y=f$. In other words, the identity morphisms are left-identities and right-identities for the composition of morphisms.
 \end{definition}
 
 
\begin{definition}

Given two categories, $C$ and $D$, a \define{functor} from $C$ to $D$ is given by the following data:
\begin{itemize}
\item a map $F\colon \ob C\to \ob D$
\item a map $C(x,y)\to C(F(x),F(y))$ for every pair of objects, $x$ and $y$, of $C$. 
\end{itemize}
These maps satisfy axioms that ensure  that the composition of morphisms is preserved and identity morphisms are sent to identity morphisms.
\end{definition}
  In an abuse of notation, it is common to denote the functor, the map on the set of objects, and the maps on the sets of morphisms using the same symbol. For example, we use $F$ to denote the functor $F\colon C\to D$, the map $F\colon \ob C\to \ob D$, and the map $F\colon C(x,y)\to C(F(x),F(y))$. With this notation, we can express the two properties that these maps satisfy in a very simple way. Specifically, one requires that
 \[
 	F(g\circ f)=F(g)\circ F(f)
 \]
  for any pair of composable morphisms in $C$ and that
  \[
 	 F(1_x)=1_{F(x)} \, .
  \]

 
 \section{Functoriality of the semigroup of relations assignment}\label{S:role analysis}
We are now ready to state the  result that will motivate the new methods we introduce. 
Let $\krelGraph$ denote the category with objects $k$-relational graphs $(V, \{A_i\}_{i=1}^k)$ and morphisms  $k$-relational graph homomorphisms that are surjective at both the level of vertices and edges. Let $\semigroup$ denote the category with objects semigroups and morphisms surjective semigroup homomorphisms. We have:

 \begin{proposition}\label{T:role}
The assignment of the semigroup of relations to a multirelational graph induces a functor $R\colon \krelGraph \to \semigroup$.
 \end{proposition}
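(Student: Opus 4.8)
The plan is to verify the two functor axioms---preservation of composition and preservation of identities---for the assignment $R$ that sends a $k$-relational graph $G=(V,\{A_i\}_{i=1}^k)$ to its semigroup of relations $\SG(A_1,\dots,A_k)$ and sends a surjective graph homomorphism $\phi\colon G\to H$ to the induced semigroup homomorphism $\SG(\phi)$. First I would make precise the action of $R$ on morphisms. Given $\phi\colon G\twoheadrightarrow H$ surjective at the level of vertices and edges, $\phi$ is determined by a surjection of vertex sets inducing, exactly as in the reduction construction, a map sending each generating relation matrix $A_i$ of $G$ to the corresponding relation matrix $\widetilde{A}_i$ of $H$. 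The fact that this rule extends from generators to a well-defined semigroup homomorphism $\SG(\phi)\colon\SG(A_1,\dots,A_k)\to\SG(\widetilde{A}_1,\dots,\widetilde{A}_k)$ is precisely the content of Lemma \ref{L:reduction homo}: the assignment $A_iA_j\mapsto\widetilde{A}_i\widetilde{A}_j$ respects Boolean matrix products, so it descends to the generated semigroups. Thus $R$ is well-defined on morphisms, and I would invoke Lemma \ref{L:reduction homo} directly here rather than reproving it.

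Next I would check that $R$ lands in the correct category, namely that $\SG(\phi)$ is a \emph{surjective} semigroup homomorphism whenever $\phi$ is surjective on edges, so that $R(\phi)$ is a legitimate morphism of $\semigroup$. This follows because every generator $\widetilde{A}_i$ of the target semigroup is hit (it is the image of $A_i$), and any semigroup homomorphism whose image contains a generating set of the codomain is surjective onto the codomain. Then I would verify preservation of identities: the identity homomorphism $1_G$ induces the identity map on generators $A_i\mapsto A_i$, which extends to the identity endomorphism of $\SG(A_1,\dots,A_k)$, so $R(1_G)=1_{R(G)}$.

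The main step is preservation of composition: given composable surjections $\phi\colon G\twoheadrightarrow H$ and $\psi\colon H\twoheadrightarrow K$, I must show $R(\psi\circ\phi)=R(\psi)\circ R(\phi)$. Because each of $R(\phi)$, $R(\psi)$, and $R(\psi\circ\phi)$ is a semigroup homomorphism, and a homomorphism out of $\SG(A_1,\dots,A_k)$ is determined by its values on the generators $A_1,\dots,A_k$, it suffices to check equality on each generator $A_i$. On generators the composite $R(\psi)\circ R(\phi)$ sends $A_i$ to the relation matrix of $K$ obtained by first passing to the partition of $H$ and then to the partition of $K$; since the partitions are nested, this coincides with the single relation matrix of $K$ produced by the composite vertex surjection, which is exactly $R(\psi\circ\phi)(A_i)$. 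Hence the two homomorphisms agree on generators and therefore everywhere.

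I expect the only genuine subtlety to lie in justifying that a semigroup homomorphism out of $\SG(A_1,\dots,A_k)$ is determined by its restriction to the generating matrices $A_1,\dots,A_k$, and correspondingly that agreement on generators forces global agreement. This is a standard universal property of generated semigroups, but it is what lets me reduce both the composition and identity axioms to finite checks on generators; once this reduction is in place, the remaining verifications are immediate from Lemma \ref{L:reduction homo} and the nestedness of the partitions. The rest of the argument is bookkeeping: confirming well-definedness via Lemma \ref{L:reduction homo}, surjectivity from the generating-set argument, and the two axioms from the generator reductions.
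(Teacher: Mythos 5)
Your proposal follows essentially the same route as the paper's proof: define $R$ on objects as $\SG(A_1,\dots,A_k)$, obtain $R(\phi)$ from the partition of the vertices induced by the surjection $\phi$ together with the block-multiplication argument of Lemma \ref{L:reduction homo}, and then verify the functor axioms by checking them on the generating matrices. If anything, you are more thorough than the paper, which dismisses the axioms as ``easy to see'': your observations that a homomorphism out of $\SG(A_1,\dots,A_k)$ is determined by its values on generators, and that $R(\phi)$ is surjective because its image contains the generators of the codomain (needed since morphisms in $\semigroup$ are required to be surjective), spell out precisely the details the paper leaves implicit.
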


 \begin{proof}
 Given a $k$-relational graphs $G=(V, \{A_i\}_{i=1}^k)$ with $|V|=n$, we assign to it the semigroup of relations $R(G)=\SG(\{A_i\}_{i=1}^k)$. Now, given a surjective $k$-relational graph homomorphism $\phi\colon G\to G'$ with $G'=(V', \{A'_i\}_{i=1}^k)$, the preimages of the nodes in $V'$ under $\phi$ give a partition of the nodes in $V$, and thus a partition of the matrices $A_1,\dots , A_k$ into blocks. 
 Since the homomorphism $\phi$ is surjective on edges, we have that there is a $1$ (respectively, a $0$) in the $(k,l)$th entry of matrix $A'_i$ exactly when the $(k,l)$th block of $A_i$ consists of all $1$s (respectively, $0$s). 
 Thus, using block multiplication of matrices, analogously as in  the proof of  Lemma \ref{L:reduction homo}, we see that $\phi$ induces a semigroup homomorphism  $R(G)\to R(G')$. 
 
 It is easy to see  that the assignment $(\phi\colon G\to G')\mapsto \big(R(\phi)\colon R(G)\to R(G')\big)$  satisfies the properties of a functor, namely that it preserves composition of morphisms and identity morphisms. 
 
 \end{proof}

  \subsection{Why functoriality is important}

As we discussed at the beginning of Section \ref{S:cat and fun}, at the core of category theory is the idea that the study of mathematical objects should involve also the study of structure-preserving relationships between the objects. While category theory has been mostly used to study problems in abstract mathematics, in recent years concepts and ideas from category theory have found more and more applications to real-world problems.\footnote{See, for instance, the discussion in \cite{C09}, from which we quote the following excerpt: ``Functoriality has proven itself to be a powerful tool in
the development of various parts of mathematics, such as Galois theory within algebra, the theory of Fourier series within harmonic analysis, and the applicaton of
algebraic topology to fixed point questions in topology. We argue that [\dots]  it has a role to play in the study of point cloud data as well [\dots].''} 

One of the main advantages that the categorical approach brings to applications is that it gives a framework conducive to  studying stability questions: 
one is often only interested in studying a data set up to some notion of perturbation or noise, and thus it is not enough to 
associate invariants or algebraic structures to a single data set, but rather one seeks to study how, when changing the data sets, the corresponding algebraic structure or invariant changes. See, e.g., \cite{LFH10} for a discussion of robustness of centrality measures in network science.

In particular, in the current work, thanks to the functoriality of the assignment of the semigroups of relations to a multirelational graph,
we can tie role and positional analysis together, as shown in Fig.\ \ref{F: role functor}: the choice of a reduction of a multirelational graph $G$ automatically gives a choice of homomorphic reduction for its corresponding semigroup of relations, and furthermore, given several reductions of a multirelation graph, as for instance provided by the nested stochastic blockmodel (see Section \ref{S:SBM}), we have a way of relating the corresponding semigroups of relations, and thus we can tie together  role and positional analysis  at different hierarchical levels, and we can relate the analyses performed at different hierarchical levels to each other.

\begin{figure}
\begin{tikzpicture}
\node (1) at (0,0) {$G=\left(n,\{A_i\}_{i=1}^k\right)$};
\node (2) at (0,-2.5) {$H=\left(m,\{B_i\}_{i=1}^k\right)$};
\node (3) at (6,0) {$R(G)=\SG\left(\{A_i\}_{i=1}^k\right)$};
\node (4) at (6,-2.5) {$R(H)=\SG\left(\{B_i\}_{i=1}^k\right)$};
\draw[|->]
(1) edge node[above] {$R$} (3)
(2) edge node[above] {$R$} (4);
\draw[->>]
(1) edge node[left] {$\phi$} (2);
\draw[->]
(3) edge node[right]{$R(\phi)$}(4);
 \end{tikzpicture}
 
 \caption{Functoriality of role analysis allows us to tie together role analysis and positional analysis. }\label{F: role functor}
 \end{figure}

 \section{Approximate equivalences}

 There are two main issues in extending Proposition \ref{T:role} to the case of approximate equivalences:
\begin{enumerate}[label=(\roman*{})]
\item Given a blockmodel of a multirelational graph $G$, how do we obtain a multirelational graph $G'$ with nodes representing the blocks? Specifically, how do we define edges in $G'$?
\item Does any reduction $G\to G'$ as obtained in the previous point induce a semigroup homomorphism $\SG(G)\to \SG(G')$?
\end{enumerate}

A common way to address point (i) is to define an image matrix with a $1$ corresponding to each non-zero block, and a $0$ to each zero block. This method is often called ``lean fit'' in the literature \cite{BBA75}. The conditions under which such a reduction of the multigraph induces a semigroup homomorphism have been studied in \cite{KR84}.

Here  we seek a way to address these two questions in a way that does not depend on specific assumptions on the  blockmodel that one chooses. To do this, we consider multigraphs with weights associated to edges, and thus instead of working with image matrices, we decide to work directly with density matrices.  This leads to the additional issue that we thus now need to work with semigroups of relations that are generated by matrices that are not Boolean, and thus the associated semigroups might not be finite. However, when studying compound relations, one is only interested in studying a finite number of compound relations \cite{LW71}. This is a fundamental observation, which  leads us to introduce truncated semigroups of relations as a new method to study compound relations in social networks.

\subsection{Reductions of multigraphs with weighted edges}

\begin{definition}
A \define{weighted multirelational graph} $G$ consists of a pair $\left(V,\{A_i\}_{i=1}^r \right)$, where $V$ is a finite set of \define{vertices} of $G$
and the adjacency matrices $A_1,\dots , A_r$ are 
$n\times n$ matrices, with  entries in $[0,1]$, that we call \define{weighted relation matrices} of $G$.
\end{definition}

Now, given a relational multigraph $\left(V,\{A_i\}_{i=1}^r \right)$, and a blockmodel $B=\{B_1,\dots , B_m\}$ the  corresponding density matrices (see Definition \ref{D:density matrix})  are $m\times m$ weighted relation matrices $\widetilde{A_1},\dots , \widetilde{A_r}$  and there is a  multirelational graph homomorphism
\[
\phi\colon \left(V,\{A_i\}_{i=1}^r \right) \longrightarrow \left(B,\{\widetilde{A}_i\}_{i=1}^r \right)
\]
defined by sending node $i$ to the block $B_k$ to which it belongs, which is surjective both at the level of vertices as well as edges.

 \subsection{Truncated semigroups of relations}
 In most types of social networks, one is only interested in studying compound relations up to a certain length. As Lorrain and White write in \cite{LW71}: ``Only in kinship networks or in formal hierarchies are very long `words' known to
represent a meaningful type of relation.''  Thus, here we introduce the truncated semigroups $SG_k(G)$ associated to a given multirelational graph $G$, for $k=1,2,3,\dots$, where $SG_k(G)$  is generated by all $j$-fold products of the relation matrices, for $j\leq k$, and we impose the relations given by setting any $j'$-fold product for $j'>k$ equal to zero. One can consider the semigroups $SG_1(G), SG_2(G),\dots $ as giving a hierarchy of compound relations; the lower the value of $k$, the more the study of compound relations focuses on local structure.

 \begin{definition}
 Given two $n\times n$ weighted relation matrices $A$ and $B$, we define the following product:
 \[
  A \bigtriangledown B (i,j)=
  \max_{k=1,\dots , n} \left\{ A(i,k)B(k,j)\right \} \, .
  \]
 \end{definition}
  This operation is clearly associative, and the identity matrix is a left and right identity element. 
 One can interpret this product as follows: we are interested in studying compound relations (or equivalently, paths of a certain length), where now to each compound relation there is a certain degree of uncertainty associated to it. Given two actors $i$ and $j$, in the case of Boolean matrices we were interested in knowing not how many compound relations of a certain type there are between $i$ and $j$, but merely whether there was one or none. Now similarly, given compound relations of the same type between two actors $i$ and $j$, we are interested in knowing what is the highest degree of certainty to which there is such a path.

 We note that for  relation matrices  $A$ and $B$, the product $\bigtriangledown$ is the Boolean matrix product.

 \begin{definition}
Given a weighted multirelational graph $(V, \{A_i\}_{i=1}^r)$, and a positive integer $k$, its \define{$k$-truncated semigroup of relations} is the semigroup that is generated by the matrices $A_1,\dots , A_r$, with binary operation $\bigtriangledown$, and relations given by setting all $k+1$-fold products equal to $0$. We denote such a semigroup by $\SG_k(A_1,\dots , A_r)$.
\end{definition}

 \section{Experiments}
 
 We illustrate our methods using two well-known social networks. 
 We first use the SBM introduced by Peixoto (see Section \ref{S:SBM}) to get a nested sequence of blockmodels, and then study truncations of the corresponding semigroups of relations. 
 
 \subsection{Nested stochastic blockmodels (SBM)}\label{S:SBM}
Nested  blockmodels were first introduced by Peixoto in \cite{Peixoto14}. The motivation behind this blockmodel is that traditionally the choice of a 
blockmodel might not be robust and one might thus not be able to distinguish structure of a network from noise. Peixoto thus proposes a nested blockmodel that gives a hierarchy of blockmodels of a given graph, and which is  obtained by considering the multigraph with nodes given by block and the same count of edges, and obtaining iteratively blockmodels with fewer block from such multigraphs. This leads to a nested hierarchy of blockmodels, where at increasing hierarchy levels the number of nodes decreases, until one eventually reaches a multigraph with just one block, and thus just one node. The edge count remains the same at every level.
Here we use the  implementation provided by Peixoto in \verb|graph-tool|
\cite{graph-tool}.
We note that this is a stochastic model, and thus different iterations might lead to different nested blockmodels.
\subsection{Sampson Monks}
This is a network describing social  relations between 18 monks in an American Monastery in the 1960's. Sampson studied their relationships during a 12-month period, and during this period there was a major conflict, which resulted in the departure of a large number of members. Here we use the relational matrices studied in \cite{BBA75}: these are two non-symmetric $18\times 18$ matrices, one summarising a series of positive feelings, which we denote by $P$ the other a collection of negative ones, which we denote by $N$.  

\subsection{Lazega Lawyers}
This network describes relationships between $71$ lawyers in a  US corporate law firm in the late 1980s. The relationships are described by three $71\times 71$ non-symmetric relational matrices: one that encodes coworking relationship, one for advice, and one for mentorship. The data was originally published in \cite{lazega}.


\subsection{Results}

\begin{figure}
(a) \includegraphics[scale=0.2]{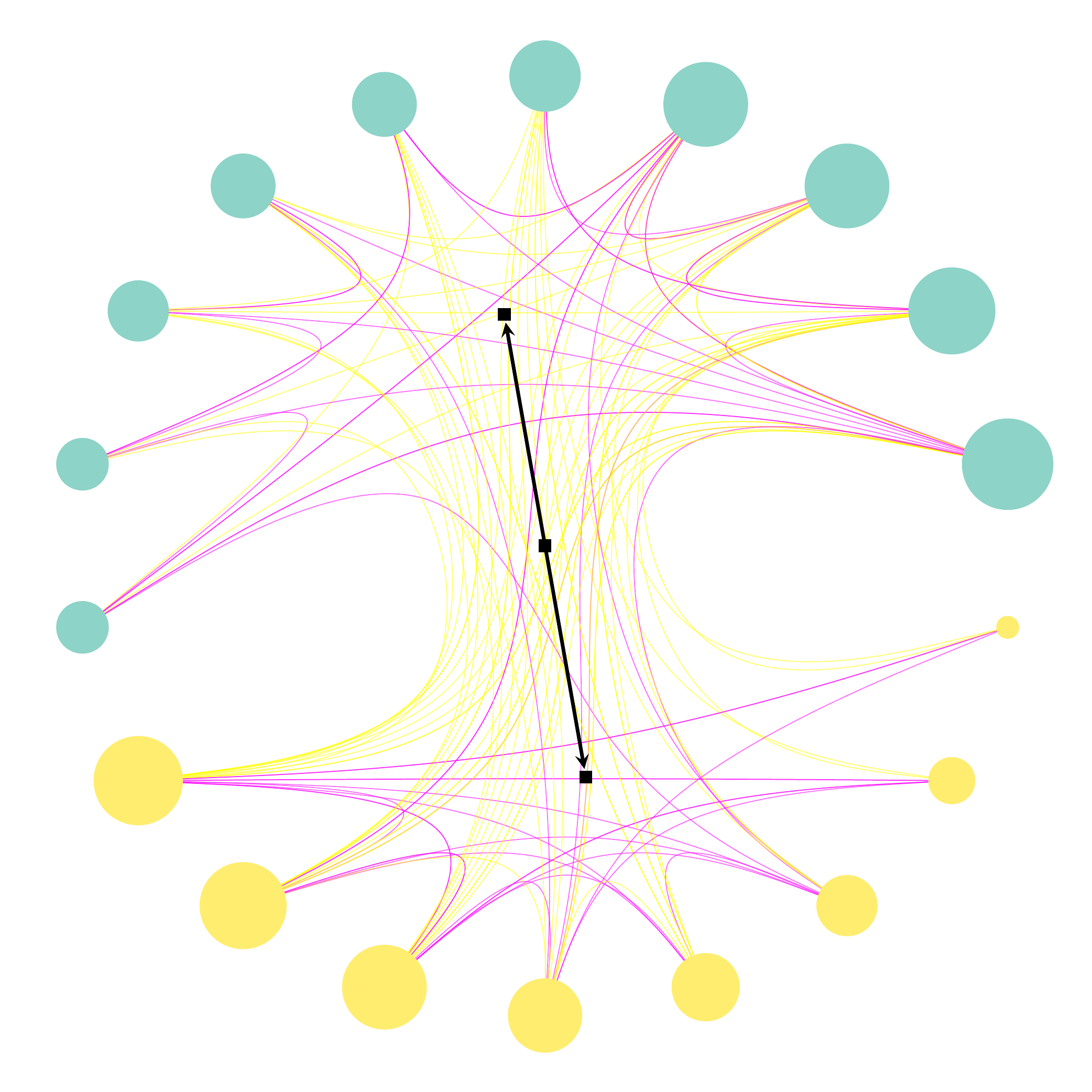} 
(b)\includegraphics[scale=0.3]{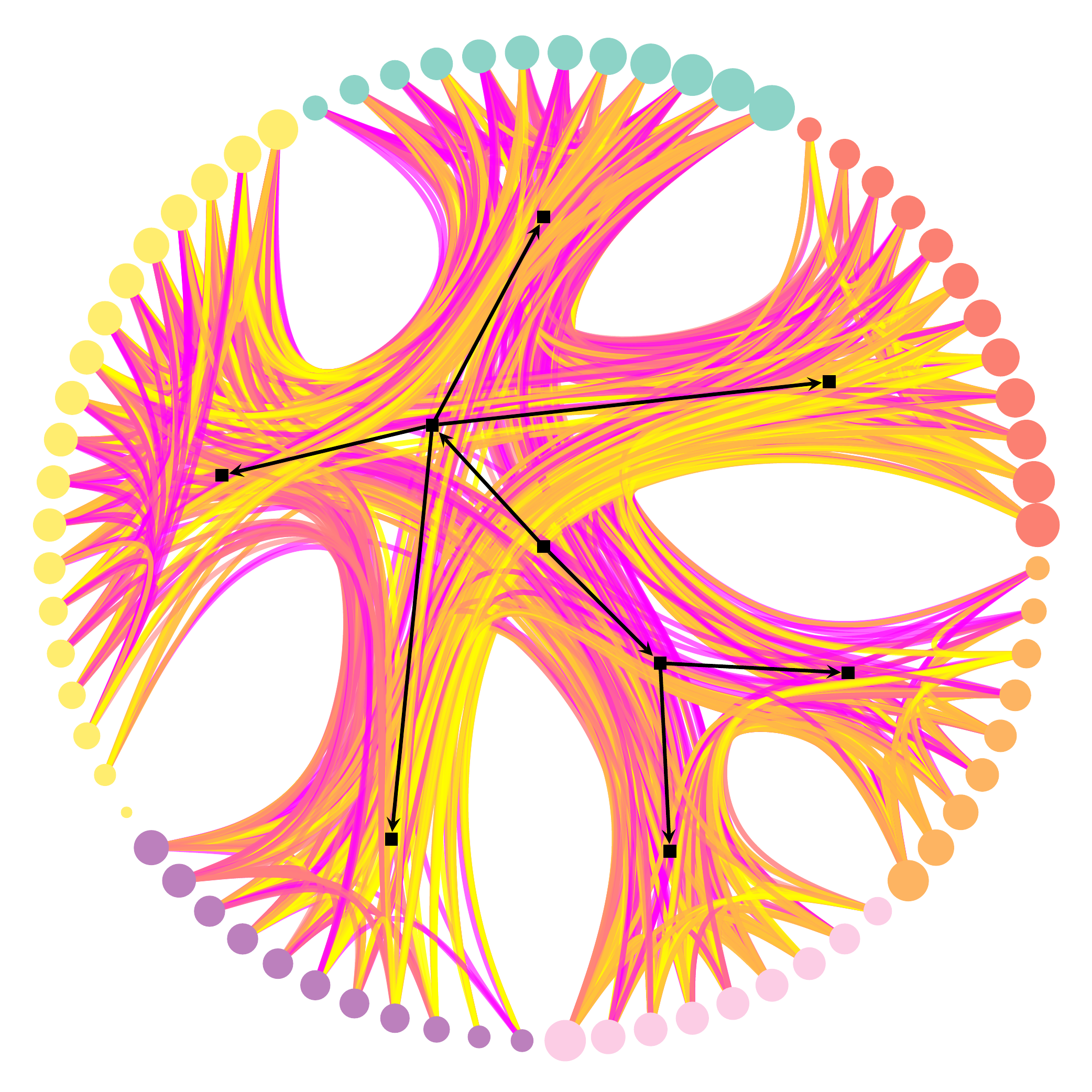}
\caption{The nested blockmodel structure of (a) the Sampson monks data set, where the algorithm returns a nested blockmodel with two levels; (b) the Lazega lawyer network, where the algorithm returns a nested blockmodel with three levels.
}\label{F:SBM}
\end{figure}

We discuss our methods for the datasets that we consider. 

\subsubsection{Monks}
The nested SBM returns two hierarchical levels for the Sampson Monks dataset: at the first level we have  the 18 nodes corresponding to the monks subdivided into two blocks, and at the next and final level the two nodes corresponding to the previous blocks are joined in the same block. We illustrate this subdivision in Figure \ref{F:SBM}(a). The subdivision returned by the SBM corresponds to the well-known subdivision studied in the literature between the ``loyal'' monks and the ``opposition''\cite{BBA75}. The semigroup of relations generated by $P$ and $N$  associated to the original network contains $55$ different elements. We next consider the density matrices $\widetilde{P}$ and $\widetilde{N}$ associated to these matrices. We have 
\[
\widetilde{P}=\begin{bmatrix}
0.11 & 0.25\\
0.11 & 0.25
\end{bmatrix}
\]
and 
\[
\widetilde{N}=\begin{bmatrix}
0.17 & 0.25\\
0.19 &0.20
\end{bmatrix}\, ,
\]
where we round the numbers up to two decimal digits.

We compute the truncated semigroup of relations $\SG_{18}(\widetilde{P},\widetilde{N})$, which contains eight elements. We note that already the semigroup $\SG_{4}(\widetilde{P},\widetilde{N})$ contains $8$ elements, and that indeed in this case the semigroup generated by the  density matrices $\widetilde{P}$ and $\widetilde{N}$ with rounded entries is finite. If we round the entries in the density matrices, we will indeed always obtain finite semigroups. Had we not rounded the matrix entries to two decimal digits, the truncated semigroup $\SG_{18}(\widetilde{P},\widetilde{N})$ would instead contain $4097$ elements. 
Apart from the two generators, we have the following $8$ compound relations.
All $4$-fold products are equal to the zero matrix:
\[\widetilde{N}\widetilde{N}\widetilde{N}\widetilde{N}=\dots = \widetilde{P}\widetilde{P}\widetilde{P}\widetilde{P}=
\begin{bmatrix}
0 & 0 \\
0 & 0
\end{bmatrix}\, .\]

For the $3$-fold products we have:

\[\widetilde{N}\widetilde{N}\widetilde{N}=\widetilde{P}\widetilde{N}\widetilde{N}= 
\widetilde{N}\widetilde{P}\widetilde{N}=
\widetilde{P}\widetilde{P}\widetilde{N}=
\widetilde{N}\widetilde{N}\widetilde{P}=
\widetilde{P}\widetilde{N}\widetilde{P}=
\begin{bmatrix}
0 .01& 0.01 \\
0.01 & 0.01
\end{bmatrix}\, ,
\]

\[\widetilde{N}\widetilde{P}\widetilde{P}=
\begin{bmatrix}
0 .01& 0.02 \\
0.01 & 0.01
\end{bmatrix}\]
and 
\[\widetilde{P}\widetilde{P}\widetilde{P}=
\begin{bmatrix}
0 .01& 0.02 \\
0.01 & 0.02
\end{bmatrix}\, .\]

The $2$-fold products are

\[\widetilde{N}\widetilde{N}=
\begin{bmatrix}
0 .05& 0.05 \\
0.04 & 0.05
\end{bmatrix}\]

\[\widetilde{P}\widetilde{N}=
\begin{bmatrix}
0 .05& 0.05 \\
0.05 & 0.05
\end{bmatrix}\]

\[\widetilde{N}\widetilde{P}=
\begin{bmatrix}
0 .03& 0.06 \\
0.02 & 0.05
\end{bmatrix}\]

\[\widetilde{P}\widetilde{P}=
\begin{bmatrix}
0 .03& 0.06 \\
0.03& 0.06
\end{bmatrix}\, .\]

\subsubsection{Lawyers}

The nested SBM returns a nested blockmodel with three levels: the first with $71$ nodes corresponding to the lawyers, divided into six blocks; the second with $6$ nodes, divided into two blocks, and the last with one block. We denote the corresponding density matrices for the first  and second level with subscripts $1$ and $2$, respectively. 
The semigroup $\SG(\Adv,\Fr,\Work)$ contains $202$ elements. The truncated semigroup $\SG_{15}(\Adv_1,\Fr_1,\Work_1)$ contains $66$ element, when we approximate matrix entries up to 2 decimal entries. We note that in this case the semigroup $\SG(\Adv,\Fr,\Work)$ contains $66$ elements, and we have
${\SG_4(\Adv_1,\Fr_1,\Work_1)=\SG_{15}(\Adv_1,\Fr_1,\Work_1)}$.

Next, we compute the truncated semigroups for the density matrices corresponding to the second level in the hierarchy. Here we note that since in the SBM the graphs at each level have the same number of nodes, we can again compute the entries in the matrices
$\Adv_2,\Fr_2,\Work_2$ as fraction of $1$s. 
We have $\SG_3(\Adv_2,\Fr_2,\Work_2)=\SG(\Adv_2,\Fr_2,\Work_2)$ when approximating matrix entries up to two decimal digits, and this semigroup contains $19$ elements.

%

\section{Conclusions and discussion}

In this paper we have introduced new methods to perform role and positional analysis in networks, by proposing weighted multirelational graphs, with weights capturing the level of uncertainty at which a certain relation between blocks of a blockmodel occurs, and truncated semigroups of relations to study patterns between such relations. Our framework is very general, and it does not depend on the algorithm chosen to find a blockmodel, nor on the definition of similarity that one uses. 

Some interesting follow-up questions include:
\begin{itemize}
\item extending this framework to multilayer networks;
\item exploiting our framework to study stability of positional and role analysis;
\item comparing truncated semigroups of relations obtained from different types of blockmodel algorithms.
 \end{itemize}
%
%
 
 
\bibliographystyle{plain}
\bibliography{equiv}

\begin{thebibliography}{10}

\bibitem{BM80}
P.~Bonacich and M.~J. McConaghy.
\newblock The algebra of blockmodeling.
\newblock {\em Sociological Methodology}, 11:489--532, 1980.

\bibitem{TB18}
T.-D. Bradley.
\newblock What is applied category theory?
\newblock {\em arXiv e-prints}, 2018.
\newblock arXiv:1809.05923.

\bibitem{BBA75}
R.~L. Breiger, S.~A. Boorman, and P.~Arabie.
\newblock An algorithm for clustering relational data with applications to
  social network analysis and comparison with multidimensional scaling.
\newblock {\em Journal of Mathematical Psychology}, 12(3):328 -- 383, 1975.

\bibitem{BP86}
R.~L. Breiger and P.~E. Pattison.
\newblock Cumulated social roles: the duality of persons and their algebras.
\newblock {\em Social networks}, 8:215--256, 1986.

\bibitem{BS14}
P.~Bubenik and J.~A. Scott.
\newblock Categorification of persistent homology.
\newblock {\em Discrete $\&$ Computational Geometry}, 51:600--627, 2014.

\bibitem{C09}
G.~Carlsson.
\newblock Topology and data.
\newblock {\em Bulletin of the American Mathematical Society}, 46(2):255--308,
  2009.

\bibitem{Ev80}
H.~W. Eves.
\newblock {\em Elementary Matrix Theory}.
\newblock Dover Books on Mathematics Series. Dover, 1980.

\bibitem{FS18}
B.~Fong and D.~I. Spivak.
\newblock {\em An Invitation to Applied Category Theory: Seven Sketches in
  Compositionality}.
\newblock Cambridge University Press, 2019.

\bibitem{freeman}
L.~C. Freeman.
\newblock {\em The Development of Social Network Analysis: A Study in the
  Sociology of Science}.
\newblock Empirical Press, 2004.

\bibitem{KR84}
K.~H. Kim and F.~W. Roush.
\newblock Group relationships and homomorphisms of boolean matrix semigroups.
\newblock {\em Journal of Mathematical Psychology}, 28(4):448 -- 452, 1984.

\bibitem{LFH10}
A.~Landherr, B.~Friedl, and J.~Heidemann.
\newblock A critical review of centrality measures in social networks.
\newblock {\em Business {\&} Information Systems Engineering}, 2(6):371--385,
  Dec 2010.

\bibitem{lazega}
E.~Lazega.
\newblock {\em The Collegial Phenomenon: The Social Mechanisms of Cooperation
  Among Peers in a Corporate Law Partnership}.
\newblock Oxford University Press, 2001.

\bibitem{LW71}
F.~Lorrain and H.~C. White.
\newblock Structural equivalence of individuals in social networks.
\newblock {\em The Journal of Mathematical Sociology}, 1(1):49--80, 1971.

\bibitem{N18}
M.~Newman.
\newblock {\em Networks}.
\newblock Oxford University Press, 2018.
\newblock Second edition.

\bibitem{graph-tool}
T.~P. Peixoto.
\newblock graph-tool.
\newblock https://graph-tool.skewed.de/.

\bibitem{Peixoto14}
T.~P. Peixoto.
\newblock Hierarchical block structures and high-resolution model selection in
  large networks.
\newblock {\em Phys. Rev. X}, 4:011047, Mar 2014.

\bibitem{WF94}
S.~Wasserman and K.~Faust.
\newblock {\em Social Network Analysis: Methods and Applications}.
\newblock Structural Analysis in the Social Sciences. Cambridge University
  Press, 1994.

\end{thebibliography}
 
 
\end{document}